\documentclass{article}

\usepackage[preprint]{neurips_2023}

\usepackage[utf8]{inputenc} %
\usepackage[T1]{fontenc}    %
\usepackage{hyperref}       %
\usepackage{url}            %
\usepackage{booktabs}       %
\usepackage{amsfonts}       %
\usepackage{nicefrac}       %
\usepackage{microtype}      %
\usepackage{xcolor}         %
\usepackage{amsmath}
\usepackage{amssymb}
\usepackage{subcaption}
\usepackage{graphicx}
\usepackage{amsthm}

\newtheorem{prop}{Proposition}
\newtheorem{rmk}{Remark}

\newcommand{\KL}{\operatorname{KL}}

\newcommand{\Prob}{\mathrm{Prob}}
\newcommand{\Kc}{\mathcal{K}}
\newcommand{\Gc}{\mathcal{G}}
\newcommand{\Rc}{\mathcal{R}}
\newcommand{\Xc}{\mathcal{X}}
\newcommand{\Cc}{\mathcal{C}}

\newcommand{\Nc}{\mathcal{N}}
\newcommand{\diff}{\mathrm{d}}
\newcommand{\ones}{\mathbf{1}}

\newcommand{\rev}[1]{#1}
\renewcommand{\vec}[1]{\mathbf{#1}}

\title{Joint trajectory and network inference via reference fitting}

\author{%
  Stephen Y. Zhang  \\ 
  School of Mathematics and Statistics, University of Melbourne \\ 
  Department of Genetics, Stanford University \\
  \texttt{stephenz@student.unimelb.edu.au}
}

\begin{document}

\maketitle

\begin{abstract}
  Network inference, the task of reconstructing interactions in a complex system from experimental observables, is a central yet extremely challenging problem in systems biology.
  While much progress has been made in the last two decades, network inference remains an open problem.
  For systems observed at steady state, limited insights are available since temporal information is unavailable and thus causal information is lost.
  Two common avenues for gaining \emph{causal} insights into system behaviour are to leverage temporal dynamics in the form of trajectories, and to apply interventions such as knock-out perturbations.
  We propose an approach for leveraging \emph{both} dynamical and perturbational single cell data to jointly learn cellular trajectories and power network inference.
  Our approach is motivated by min-entropy estimation for stochastic dynamics and can infer directed and signed networks from time-stamped single cell snapshots. 
\end{abstract}

\section{Introduction}

Cells are complex systems which are able to process and respond to molecular signals. A coarse but helpful simplification that lies at the heart of much of systems biology is to think of cells as a collection of interacting molecular species, and cellular behaviour as emerging from the dynamics of this molecular circuit. Viewing cells as dynamical systems poses the inverse problem of recovering information about the structure of the underlying interaction network from experimental observables. While network inference has received much attention across the span of the last two decades \cite{stumpf2021inferring, marbach2012wisdom} it remains largely an open problem, and real biological networks remain poorly characterised with only a few exceptions. As technological advances continue to push the limits of what can be measured in experiment, opportunities are created for inference methods to leverage new modalities of data \cite{badia2023gene}.

Two widely adopted experimental paradigms in modern single cell biology are time-resolved single cell transcriptomics and single cell perturbation assays. Many important biological processes, notably development, are characterised by a temporal evolution of a population of cells. Time-series studies allow population-level observation of this evolution via serial independent sampling across several timepoints. Importantly, the destructive nature of measurement means longitudinal tracking is not possible, so individual trajectories must be reconstructed \cite{haghverdi2016diffusion, schiebinger2019optimal}. Observation of temporal behaviour of the system in its natural state makes it possible to distinguish between cause and effect, and a range of computational methods have been developed to infer \emph{directed} networks from time-series single cell data \cite{ding2020analysis}.

On the other hand, perturbational studies allow \emph{interventions} such as gene knockouts to be applied to the biological system of interest in order to study the system's behaviour outside of its natural state. Interventions are a powerful approach for studying causal mechanisms underlying observed data \cite{pearl2010introduction}. Recent technologies have made large scale gene knockout/knockdown studies possible \cite{dixit2016perturb, yao2023scalable}, and the task of utilising these data for powering network inference arises naturally \cite{fiers2018mapping}. This direction has received increasing attention recently \cite{dixit2016perturb, yang2020scmageck, ishikawa2023renge, rohbeck2024bicycle}, but existing analysis approaches (with exception of \cite{ishikawa2023renge}) have predominantly focused on settings where only steady-state measurements are available from the perturbed and non-perturbed systems. In developmental systems, transient dynamics play a crucial role in determining cell fate and are thus rich in information about the governing principles that drive observed dynamics \cite{teschendorff2021statistical, reid2018transdifferentiation, maclean2018exploring}. 

We propose an approach to jointly infer trajectories and interaction networks from time-series single cell data, which we call \emph{reference fitting}.
Drawing inspiration from trajectory inference approaches based on entropy-regularised optimal transport and the Schr\"odinger bridge \cite{lavenant2024toward, schiebinger2019optimal}, our method is motivated by a \emph{least-action principle}: the observed trajectory taken by a dynamical system should minimise an energy relative to a \emph{reference process}, which depends on the system structure \cite{heitz2021ground}. While in the trajectory inference setting this reference was taken to be uninformative (i.e. no prior structural information), we now consider a parametric family of \emph{linear} reference processes and search for one which is action minimising, given observations \cite{freidlin1998random}.
Our approach can be applied to time-series datasets capturing the natural evolution of a system of interest, as well as perturbation data in the form of gene knockouts to improve the inference results. In our view, this is a major advantage of our approach over many existing trajectory and network inference methods which cannot leverage perturbation information.
Using simulated systems ensures a objective and unbiased assessment of inference performance, and we find that availability of perturbation data greatly improves the inference accuracy even if only a subset of genes are perturbed. We demonstrate the application of our method to the time-series human induced pluripotent stem cell (hiPSC) CRISPR knockout dataset of \cite{ishikawa2023renge} and find that the inferred networks compare favourably to a ChIP-seq reference subnetwork and agree with biological prior knowledge. 

After the initial version of this work was completed, the author became aware of a concurrent study addressing the problem of iteratively fitting a reference process for dynamical inference \cite{shen2024multi}. These two works differ in aspects of their application setting and implementation details -- our work uses the static formulation of Schr\"odinger bridges and an Ornstein-Uhlenbeck reference family to study interventions, while Shen et al. \cite{shen2024multi} employ a dynamical formulation and neural family of drift. However, the underlying idea is the same --  to depart from a fixed reference process and iteratively fit both the couplings and reference. 

\begin{figure}
  \centering 
  \includegraphics[width = \linewidth]{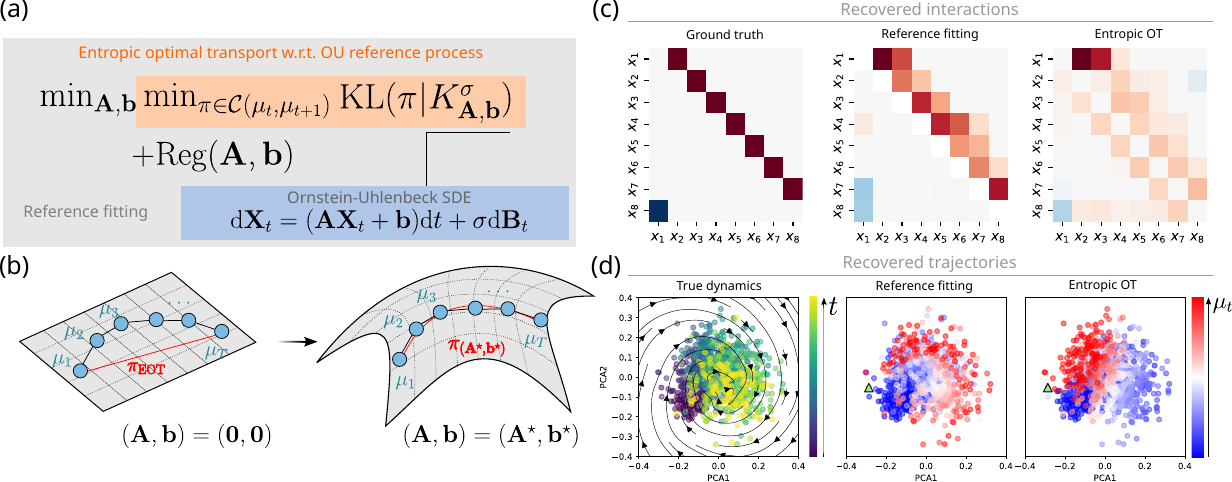}
  \caption{\rev{(a) Entropic optimal transport and reference fitting with an Ornstein-Uhlenbeck (OU) family. (b) Given a series of observed population snapshots and starting with a pure Brownian reference, iterative fitting of the reference process allows to progressively improve an estimate of the underlying dynamics. (c) Ground truth and recovered interactions for a 8-dimensional non-equilibrium OU process. (d) Temporal dynamics inferred by reference fitting and standard entropic OT, as well as the true vector field shown in the two leading principal components. In the right two panes, the family of marginals starting from a fixed point (green triangle) are shown.}}
  \label{fig:reference_fitting_OU}
\end{figure}

\section{Methods}

\paragraph{Dynamical inference}
\rev{We model cell state $X_t \in \mathbb{R}^d$ with an autonomous, drift-diffusion stochastic dynamics driven by Brownian noise $B_t$ with intensity $\sigma^2$:}
\begin{equation}
  \diff X_t = f(X_t) \: \diff t + \sigma \: \diff B_t, \quad X_0 \sim \rho_{0} .
  \label{eq:generic_sde}
\end{equation}
Consider a time-series observation setting, where snapshots from the process at $T \ge 2$ consecutive timepoints $0 = t_1, \ldots, t_T = 1$ are drawn. That is, the snapshot at each time $t_i$ comprises a collection of $N_i$ independently measured cell states $\Xc_i = \{ x_j^{t_i} \}_{j = 1}^{N_i} \subset \mathbb{R}^d$. Equivalently, this can be represented as an empirical distribution $\hat{\mu}_i = \frac{1}{N_i} \sum_{j = 1}^{N_i} \delta_{x_j^{t_i}}$. \rev{Importantly, we consider the case where longitudinal measurement is infeasible, as is the case for high-throughput single cell experiments \cite{deconinck2021recent, schiebinger2019optimal}.}

Imposing a Brownian reference dynamics where the noise level $\sigma > 0$ is known and fixed, between two consecutive snapshots $(\mu, \mu')$ at instants $t = 0, 1$, a well known entropic least action principle corresponding to the Schr\"odinger bridge \cite{leonard2013survey} can be used to infer the ``most likely'' conditional evolution of the system:
\begin{equation}
  \min_{\pi \in \Cc(\mu, \mu')} \sigma^2 \KL(\pi | K_\sigma),
  \label{eq:least_action_brownian}
\end{equation}
where $\KL(\mu | \nu) = \int \diff \mu \log \left( \diff \mu / \diff \nu \right)$ denotes the Kullback-Leibler divergence between probability measures. \rev{In the above, $K_\sigma$ is the Gaussian transition kernel on $\mathcal{X} \times \mathcal{X}'$, i.e. $K_{\sigma}(x_i, x_j) \propto \exp\left( - \| x_i - x_j \|_2^2 / 2\sigma^2 \right)$, and the coupling $\pi \in \Prob(\Xc \times \Xc')$ describes inferred trajectories of cells between successive timepoints.}
\rev{The set of candidate couplings is
  \begin{equation}
    \Cc(\mu, \mu') = \Biggl\{ \gamma \in \Prob(\Xc \times \Xc') : \sum_j \gamma_{ij} = \mu_i, \sum_i \gamma_{ij} = \mu'_j \Biggr\}.
  \end{equation}
  That is, the set of all possible joint distributions compatible with the marginals $(\mu, \mu')$.}
The least action principle \eqref{eq:least_action_brownian} can therefore be understood also as a minimum-entropy principle, where the most likely conditional evolution $\pi$ is the one that is closest to the reference process $K_{\sigma}$ in relative entropy \cite{leonard2013survey}. 

\paragraph{Reference fitting}
Instead of using a Brownian motion reference process $\sigma B_t$ (which specifies a prior dynamics with only diffusion), we consider a more general family of Ornstein-Uhlenbeck (OU) processes described by the linear SDE 
\begin{equation}
  \diff X_t = (A X_t + b) \: \diff t + \sigma \: \diff B_t.
  \label{eq:linear_sde}
\end{equation}
\rev{These dynamics exhibit both drift and diffusion, where the drift component is prescribed by a linear interaction matrix $A \in \mathbb{R}^{d \times d}$, as well as a constant drift term $b$.} We do not impose any structural constraints on $A$, allowing this model to capture directed and signed interactions. Each element $A_{ij}$ can thus be interpreted as the effect of gene $j$ on gene $i$, where a positive (negative) value means activation (repression). Systems of the form \eqref{eq:linear_sde} naturally arise as the linearisation of more complex stochastic systems -- expanding \eqref{eq:generic_sde} about $X = 0$ for instance yields \eqref{eq:linear_sde} with $A_{ij} = \frac{\partial f_i}{\partial x_j}$. Thus, while realistic biological dynamics exhibit complex behaviour consistent with non-linear systems, the linearised dynamics we consider provide a middle ground between biophysical realism and mathematical tractability. 

Let $K_{A, b}^{\sigma}$ be the transition kernel of the OU process \eqref{eq:linear_sde} with drift given by $(A, b)$. We consider the problem where both the coupling $\pi$ and reference $K_{(A, b)}^{\sigma}$ are sought:
\begin{equation}
  \min_{A \in \mathbb{R}^{d \times d}, b \in \mathbb{R}^{d}} \: \min_{\pi \in \Cc(\mu, \mu')} \sigma^2 \KL(\pi | K_{(A, b)}^{\sigma}) + \Rc(A, b).
  \label{eq:least_action_ou}
\end{equation}
\rev{In the above, $\Rc$ is a regulariser applied to $(A, b)$. As we explain in what follows, this is essential to ensure a well-defined optimisation problem due to issues of non-identifiability of the drift matrix $A$ from snapshot observations.}

\rev{Clearly, if we discard the requirement that the reference kernel $K$ arise from a SDE of the form \eqref{eq:linear_sde}, there are, unhelpfully, \emph{infinitely} many pairs $(\pi, K)$ that satisfy $\KL(\pi | K) = 0$. Given some $\pi \in \Cc(\mu, \mu')$ one may trivially pick $K = \pi$.
  On the other hand, constraining $K_{(A, b)}^{\sigma}$ to be the transition kernel for \eqref{eq:linear_sde} for some parameters $(A, b)$ provides necessary additional structure to avoid these trivial solutions. Writing $\Kc$ to denote the set of feasible reference kernels (in our case, the family of OU transition kernels generated by some $(A, b)$), if $\Kc \cap \Cc(\mu, \mu')$ is non-empty then there exists at least one process of the form \eqref{eq:linear_sde} that perfectly explains the observations $(\mu, \mu')$. On the other hand if this intersection is null, we are essentially seeking the ``closest'' (in terms of KL-divergence) pair of coupling and transition kernel. }

\paragraph{Modelling perturbations} Although our framework is applicable to time-series snapshot data in general, we highlight the setting where time series data with perturbations are available, as they may help resolve interactions that are not identifiable from the natural dynamics alone. Motivated by recent works \cite{ishikawa2023renge, rohbeck2024bicycle}, we consider gene knock-out perturbations, in which expression of a gene of interest may be switched off. \rev{In what follows we omit the bias $b$ and focus on fitting the interaction matrix $A$, although we remark that generalisation to affine drifts is straightforward.}

In a scenario where a gene $g$ is knocked out, we \emph{modify} the linear interaction matrix to $A^{(g)}$ where the $g$th row is set to zero, reflecting that the expression of the knocked-out gene $g$ is no longer dependent upon other genes. That is, $A^{(g)} = A \odot M^{(g)}$ where $M^{(g)}_{ij} = \ones_{i \ne g}$ is a masking matrix. For a gene $g$ knockout, one then has the following modified reference dynamics:
\[
  \diff X^{(g)}_t = (A \odot M^{(g)}) X^{(g)}_t \: \diff t + \sigma \: \diff B_t . 
\]

For a set of perturbed genes $\Gc$ along with the wild type trajectory observed at times $t = 1, \ldots, T$, we formulate the following objective over interaction matrices $A$ of the OU process
\rev{\begin{align}
  \min_{A} \frac{1}{|\Gc|+1} \sum_{g \in \Gc \cup \{ \emptyset \}} \left[ \frac{1}{T-1} \sum_{i = 1}^{T-1} \min_{\pi^{(g)}_i \in \Cc(\hat{\mu}^{(g)}_i, \hat{\mu}^{(g)}_{i+1})} \sigma^2 (t_{i+1} - t_i) \KL(\pi^{(g)}_{i} | K^{(g)}_{A, \sigma}(t_{i+1} - t_i)) \right] + \lambda \Rc(A).
  \label{eq:objective_joint}
\end{align}}
\rev{In the above, $\pi^{(g)}_i$ are the couplings between times $t_i, t_{i+1}$ for condition $g$, and $K^{(g)}_{A, \sigma}(\Delta t)$ corresponds to the reference kernel under perturbation condition $g$ over a time interval $\Delta t$}. We have written $g = \emptyset$ to correspond to the wild type. The functional $\Rc$ applies a regularisation to the interaction matrix $A$ which is crucial for recovering good results in this non-convex inference problem \rev{(see the appendix for a discussion)}. In practice, we use an elastic net regulariser \cite{zou2005regularization}
\begin{equation}
  \Rc(A) = \alpha \| A \|_2^2 + (1-\alpha) \| A \|_1, \: \alpha \in [0, 1].
  \label{eq:elastic_net}
\end{equation}

We remark that, without perturbation or temporal information, one cannot expect to recover the interaction matrix $A$ from snapshot data since there may be many matrices $A$ which give rise to the same equilibrium distribution \cite{rohbeck2024bicycle, dettling2023identifiability}. In our setting, temporal information in the form of snapshots of transient states, as well as perturbations may help resolve these non-identifiabilities.

\paragraph{Identifiability and necessity of regularisation}

We note that, for a \emph{fixed} reference process $K_{A}^{\sigma}$ , the problem \eqref{eq:least_action_ou} is convex in $\pi$ with a unique minimiser. Furthermore, the optimal $\pi$ can be computed via the celebrated Sinkhorn matrix scaling algorithm \cite{cuturi2013sinkhorn}.
\rev{The problem in terms of $K_{A}^\sigma$, however, is non-convex since the transition kernel depends upon $A$ via a matrix exponential. This is in some sense a consequence of fundamental limitations of drift inference from snapshot data \cite{weinreb2018fundamental}: in general, non-conservative forces cannot be uniquely recovered from snapshot observations alone. As a simple two-dimensional counterexample, one may consider $X_0$ distributed with radial symmetry and $A = \kappa \left[\begin{smallmatrix} 0 && 1 \\ -1 && 0 \end{smallmatrix}\right] - \mathrm{Id}$ for any $\kappa \in \mathbb{R}$. 
  In the objective \eqref{eq:least_action_ou}, this manifests through the non-injectivity of the matrix exponential $A \mapsto e^{tA}$ upon which $K_{(A, b)}^\sigma$ depends, whenever $A$ is asymmetric with complex eigenvalues (which is exactly the case we are interested in). In the context of OU processes, this issue has been recognised and discussed in the steady-state setting of graphical continuous Lyapunov models \cite{fitch2019learning, varando2020graphical} for which the question of theoretical consistency remains, to the best of this author's knowledge, open \cite{dettling2024lasso}. In all of these instances, the use of Lasso regularisation has been key to deal with the non-identifiability issues and achieve good results.
  In our setting also, we find that penalisation of $(A, b)$ is required to ensure well-posedness of \eqref{eq:least_action_ou}, namely existence of local minimisers.}

\paragraph{Approximation of the transition kernel}
For OU processes, transition kernels are Gaussians parameterised by their mean and covariance: $X_t | X_0 = x_0 \sim N(\mu_t, \Sigma_t)$, where $\mu_t = e^{t A} x_0$ and $\Sigma_t = \sigma^2 \int_0^t e^{(t - \tau) A} e^{(t - \tau) A^\top} d\tau$. 
In practice, although closed-form expressions are available for the transition density, we found that numerical optimisation of $A$ while accounting for the covariance structure is unstable. We make an approximation in which we decouple the drift and noise:
\[
  \mu_t = e^{t A} x_0, \quad \Sigma_t = \sigma^2 t I. 
\]
This approximation can be understood as a splitting approximation: the Fokker-Planck equation governing the density evolution of the OU process is
\begin{align*}
  \partial_t u_t(x) = -\nabla \cdot (u \vec{v}(x)) + \frac{\sigma^2}{2} \Delta u. 
\end{align*}
A standard splitting scheme inspired by the numerical PDE literature \cite{holden2010splitting} amounts to applying solution operators for the advection and diffusion steps separately. The solution for the advection step amounts to application of the propagator $e^{tA}$, while the diffusion step corresponds to convolution with a heat kernel of bandwidth $\sigma^2 t$: $K(x, x') \propto \exp\left( - \frac{\| x - x' \|_2^2}{2 \sigma^2 t} \right)$. The overall result of this approximation is a transition kernel of the form
\[
  K_t(x, x') \propto \exp\left( - \frac{ \| e^{t A} x - x' \|_2^2 }{2 \sigma^2 t} \right). 
\]

\paragraph{Interpretation as a ground cost learning problem}
The problem \eqref{eq:objective_joint} is formulated in terms of the unknown couplings $\pi$ between snapshots as well as the interaction matrix $A$ parameterising the Ornstein-Uhlenbeck reference process and is therefore a \rev{joint} optimisation problem. Evaluation of the objective for a given $A$ requires solution of a matrix scaling problem via Sinkhorn iterations at each step. From this point of view, the problem is analogous to the ground metric learning problem, in which a series of probability distributions are given and an underlying metric is sought for which the observed sequence of distributions is action minimising \cite{heitz2021ground}. In a sense, our approach can be viewed as drawing inspiration from metric learning to address system identification, where the metric is tied to the underlying system structure.

In our setting, the cost we consider does not arise from a ground metric \emph{per se}, but rather in a probabilistic sense from the transition kernel of a reference process.
Given an observed sequence of distributions $\hat{\mu}_1, \ldots, \hat{\mu}_T$, we then seek a dynamics $A$ that minimises the overall action $A \mapsto \frac{1}{T} \sum_{t = 1}^{T-1} T_A(\hat{\mu}_t, \hat{\mu}_{t+1})$, where $T_A(\mu_t, \mu_{t+1}) = \min_{\pi \in \Cc(\mu_t, \mu_{t+1})} \KL(\pi | K_t(A))$ is the entropic optimal (EOT) transport cost between $\mu_t$ and $\mu_{t+1}$ with reference $A$.

\paragraph{Interpretation as an inverse optimal transport problem}
Finally, we remark that the problem \eqref{eq:least_action_ou} can be related to a form of inverse optimal transport \cite{stuart2020inverse} as follows. Let $\pi$ be fixed, i.e. calculated with $A = 0$ in which case they coincide with the standard entropy regularised optimal transport couplings of \eqref{eq:least_action_brownian}. It is well known that in this case for a time-series observation setting, the recovered couplings $\pi$ are consistent with the evolution of a potential-driven system \cite{lavenant2024toward}:
\[
  \diff X_t = -\nabla \Psi(X_t) \diff t + \sigma \diff B_t. 
\]
Then one seeks to minimise in $A$ the objective of \eqref{eq:least_action_ou}, with $\pi$ fixed. This can then be seen as that of finding $A$ which induces a cost function for which the action of a given coupling $\pi$ is minimised.

\section{Results}

\paragraph{Reference fitting using transient dynamics}
We first demonstrate the utility of learning a reference process simultaneously with the couplings between captured snapshots, Drawing some inspiration from the example of \cite{chardes2023stochastic}, we use the simple example of a non-equilibrium Ornstein-Uhlenbeck process in 8 dimensions. \rev{We choose the matrix $A$ to have the pattern shown in Figure \ref{fig:reference_fitting_OU}(c) and to be Hurwitz stable, i.e. with all eigenvalues having negative real part.} We set $\sigma = 0.05$ and sampled $T = 10$ independent timepoints of 100 points in the time interval $t \in [0, 10]$, where the initial condition was chosen to be out-of-equilibrium: $e_1 + \Nc(0, 0.05)$. As a result, the system is observed to relax towards its (non-equilibrium) steady state via a transient dynamics, which we capture in our time series (shown in Figure \ref{fig:reference_fitting_OU}(a)).

We apply reference fitting to the sampled snapshots, opting for an alternating optimisation scheme over the couplings and the reference dynamics \rev{(see Appendix for further discussion)}. The inferred interactions are shown in Figure \ref{fig:reference_fitting_OU}(c). We considered also the case where couplings were obtained from \rev{EOT (i.e. being the optimal ones for a Brownian motion prior)}, which in theory are consistent with gradient driven dynamics \rev{and are thus inappropriate to describe this system which exhibits non-conservative forces}. Clearly, the reference fitting approach recovers the underlying pattern in the interaction matrix, while in the case of fixed couplings this is lost.

Our reference fitting yields not only an inferred interaction matrix but also couplings $(\pi_{1}, \ldots, \pi_{T-1})$ adapted to the fitted reference process. These couplings correspond to the inferred dynamics. In Figure \ref{fig:reference_fitting_OU}(d) we illustrate the recovered processes by showing the family of marginals starting from a test point at $t = 0$. That is, we chose a point $x^*$ (shown as the green triangle) at time $t_1 = 0$, and considered $\mathbb{P}(X_{t_i} = \cdot | X_{t_1} = x^\star), 1 \leq i \leq T$. From the marginals traced out by this construction, reference fitting produces a inferred dynamics that agrees with the true drift (clockwise in the first two PCA dimensions) while with fixed couplings the underlying rotational vector field is clearly not captured.

\paragraph{Simulated data with knockouts}

We next consider a more realistic simulation model of gene expression dynamics which captures the nonlinear nature of natural biological networks. We use BoolODE \cite{pratapa2020benchmarking}, a trajectory simulation tool which models cellular dynamics as arising from a boolean network using a chemical Langevin equation (CLE). Cells were simulated from a regulatory network of 8 transcription factors displaying trifurcating structure shown in Figure \ref{fig:trifurcating}(a) at 5 timepoints, with a total of 1000 cells in each time-course. In this network, three branches arise from the activation of $\{ g4, g5, g6 \}$ by $g3$ and mutual repression. The importance of these genes in the network is also reflected in their high network centrality scores. In addition to the unperturbed system we simulate three additional cases \rev{where one of $g3, g4, g6$ have been knocked out (see Figure \ref{fig:trifurcating}(b))}.

\begin{figure}[h]
  \centering
  \includegraphics[width = \linewidth]{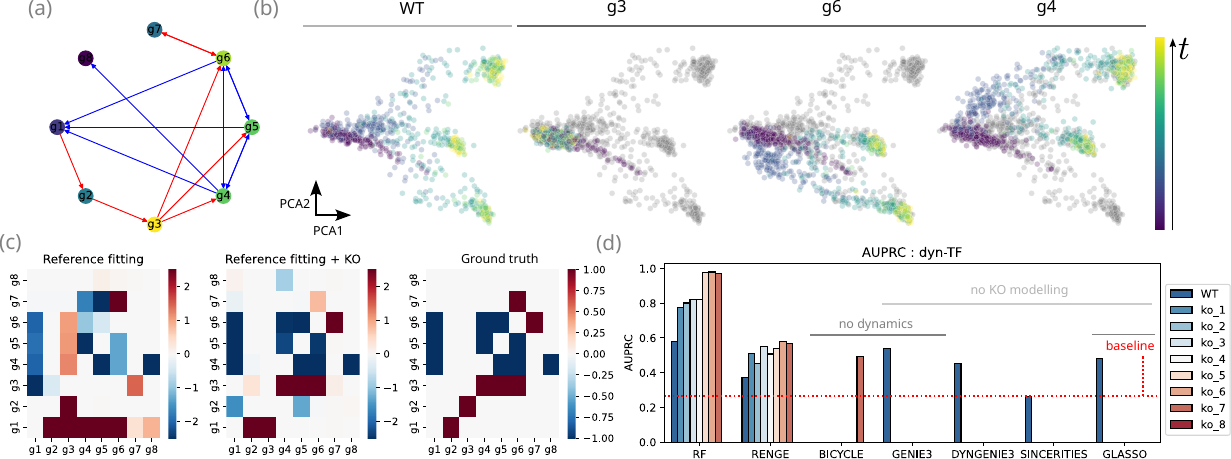}
  \caption{(a) Trifurcating synthetic network, nodes coloured by centrality. Activating (inhibitory) edges are shown in red (blue). (b) Wild-type samples and knockouts, coloured by simulation time. (c) Inferred networks from reference fitting without (AUPR = 0.54) and with knockouts (AUPR = 0.84), compared to the ground truth. (d) Network inference accuracy (averages over 10 datasets) as measured by AUPRC using reference fitting with and without knockouts, compared to alternative methods. }
  \label{fig:trifurcating}
\end{figure}

Applying reference fitting to the snapshot data with knockouts, we find that we achieve good recovery of the network topology (AUPR = 0.84, Figure \ref{fig:trifurcating}(c)). On the other hand when only the unperturbed data are used, performance worsens considerably (AUPR = 0.54).

\rev{To further understand the performance of RF for different numbers of knockouts, we vary the number of available knockouts starting from the WT only trajectory and adding knockout genes in order of decreasing out-edge eigencentrality. We also apply several other network inference approaches for comparison: RENGE \cite{ishikawa2023renge}, which is the only other network inference approach that can utilise both temporal and perturbational data, as well as BICYCLE \cite{rohbeck2024bicycle} which models perturbations but not time-series data. We include the more classical methods GENIE3 \cite{huynh2010inferring} which was shown to be among the top performers in benchmarks \cite{pratapa2020benchmarking}, as well as dynGENIE3 \cite{huynh2018dyngenie3} and SINCERITIES \cite{papili2018sincerities} which are designed for time-series data. Finally as an additional baseline, we consider the graphical LASSO \cite{friedman2008sparse} which is a classical covariance-based network inference approach. Since none of these approaches model perturbations, we ran them on the combined data across all conditions.}

\rev{The results of this comparison are shown in Figure \ref{fig:trifurcating}(d) where performance is again measured in terms of the area under the precision-recall curve (AUPRC), as has been the standard in prior work \cite{pratapa2020benchmarking}. We find that reference fitting with time-series, even in the case \emph{without} perturbations, outperforms competing methods in this case. Additionally, we find that RF performance increases with the number of available knockout conditions. Although RENGE is also designed to leverage temporal and perturbation information, we find that even with knockouts it only marginally outperforms GENIE3 which does not use either temporal or perturbation information. This suggests that RENGE may not be effectively integrating these additional sources of information.
  We find that the accuracy of BICYCLE is also relatively low despite having many knockout conditions available. While this may largely reflect the limitations of steady-state assumptions, we also note that the optimisation scheme for BICYCLE is fairly sophisticated \cite{rohbeck2024bicycle} and that it is possible that hyperparameter tuning or longer training may improve results\footnote{Due to compute time constraints, we ran BICYCLE for each instance on CPU for 5000 epochs pretraining latents and 5000 epochs fitting the model, which took over 12 hours on CPU for each instance. By comparison, RF runs in a matter of minutes for the same input size. }. 
We find that SINCERITIES performs particularly poorly. Similar behaviour was observed in \cite{pratapa2020benchmarking}, where SINCERITIES struggled the most in the trifurcating case. }

\begin{figure}[h]
  \centering
  \includegraphics[width = 0.8\linewidth]{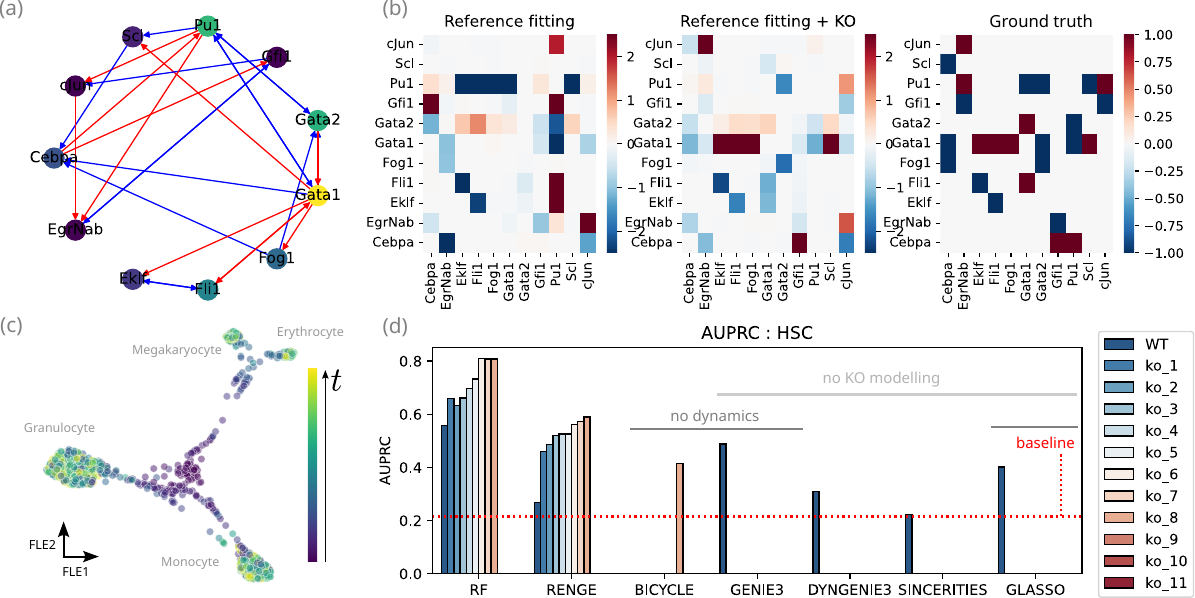}
  \caption{(a) Haematopoietic stem cell (HSC) network from \cite{pratapa2020benchmarking}. Activating (inhibitory) edges are shown in red (blue). (b) Inferred networks from reference fitting without (AUPR = 0.29) and with knockouts (AUPR = 0.73), compared to the ground truth. (c) Force-layout dimensionality reduction of wild-type trajectory coloured by time, showing four stable states. (d) Network inference accuracy (averaged over 10 datasets) as measured by AUPRC using reference fitting with and without knockouts, compared to alternative methods.}
  \label{fig:hsc}
\end{figure}

As a biologically grounded example, we consider a network of 11 TFs involved in haematopoietic stem cell (HSC) differentiation \cite{krumsiek2011hierarchical} (Figure \ref{fig:hsc}(a)), across 5 timepoints. This network produces a dynamics in which cells evolve from a stem-like state characterised by Cebpa$^+$/Gata2$^+$/Pu1$^+$ towards four clusters corresponding to granulocyte, monocyte, erythrocyte and megakaryocyte \cite[Supplementary Figure 8]{pratapa2020benchmarking}. We also simulate knockout trajectories for the top 5 TFs by centrality: $\{ \text{Gata1}, \text{Fli1}, \text{Fog1}, \text{Eklf}, \text{Scl} \}$. As with the trifurcating network, providing reference fitting with knockout information yields a considerable performance improvement (AUPR = 0.73, compared to AUPR = 0.29 without knockouts). Without knockouts, we find that reference fitting performs comparably to or better than competing approaches (Figure \ref{fig:hsc}(c, d)). 
Finally, we find that SINCERITIES struggles again in this example, where the number of timepoints ($T = 5$) is the minimum possible for the method to handle.

\paragraph{CRISPR perturbation time-series}

Ishikawa et al. \cite{ishikawa2023renge} generated a time-series dataset of human induced pluripotent stem cell (iPSC) with CRISPR knockout perturbations of 23 transcription factors and across 4 timepoints. Thus for each knockout, the temporal propagation of the loss of expression is captured. We used the energy distance \cite{rizzo2016energy, peidli2024scperturb}, \rev{a nonparametric distance between general distributions,} to quantify the change in population-level gene expression between each knockout population and the wild type (see Appendix Figure \ref{fig:crispr_edists}). It is clear that there are several TF knockouts which result in large changes in cell state, while other TFs have negligible effects. We therefore selected the top 8 knockouts (ranked by energy distance), $\{ \text{Prdm14}, \text{Pou5f1}, \text{Runx1t1}, \text{Sox2}, \text{Zic2}, \text{Nanog}, \text{Myc}, \text{Zic3} \}$ for further investigation. We show in Figure \ref{fig:crispr}(a) the different perturbed population profiles. The knockouts associated with the largest change in cell state (both visible from the UMAP and in terms of the energy distance) include Pou5f1 and Sox2, both known to be central and canonical regulators implicated in the maintenance of pluripotency and stem cell differentiation \cite{pan2002stem, rizzino2009sox2, chew2005reciprocal}. When Oct4 is knocked out (Figure \ref{fig:crispr}(b)), the time-series captures a progressive shift in cell states as the knockout effect propagates. On the other hand, the wild type population remains in a steady state.

\begin{figure}[h]
  \centering
  \includegraphics[width = \linewidth]{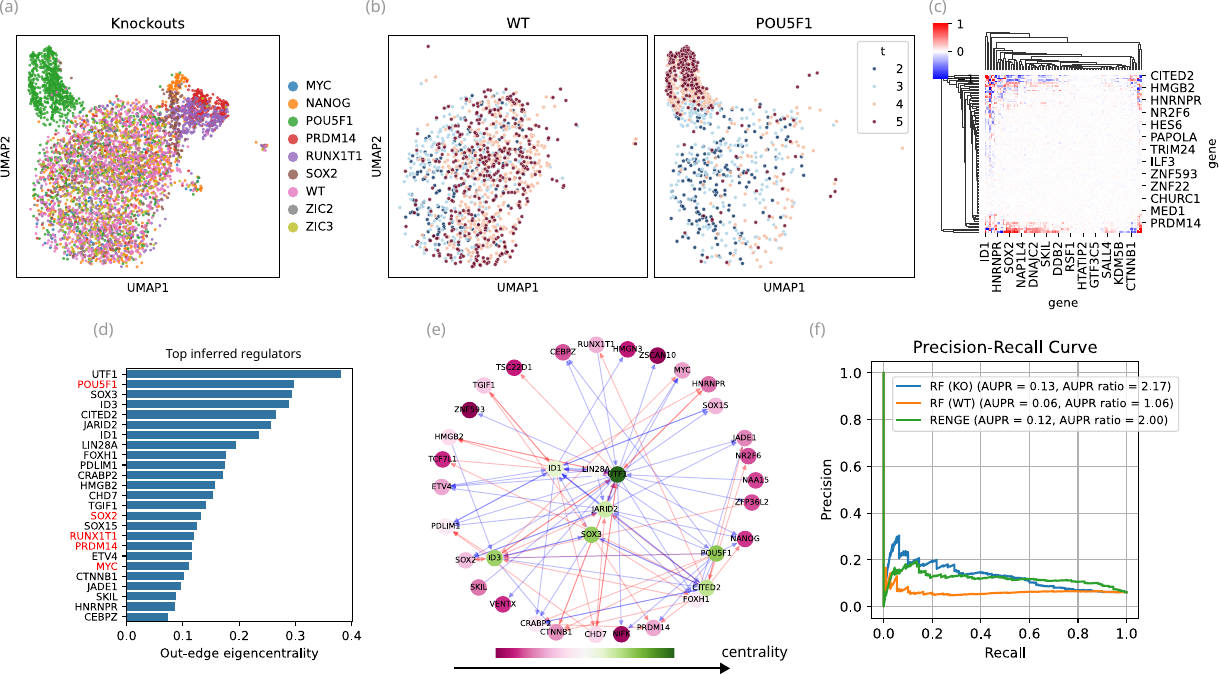}
  \caption{\rev{(a) Single cells from wild type and knockout conditions across timepoints, coloured by knockout condition. (b) Wild-type and Pou5f1 (Oct4) knockout cells coloured by timepoint, showing temporal dynamics of knockout propagation. (c) Inferred network on subset of 103 TFs, shown as a signed adjacency matrix. (d) Top 25 TFs in network inferred using reference fitting with knockouts, by out-edge eigenvector centrality. TFs for which knockout data were used shown in red. (e) Inferred network (thresholded top 2.5\% of edges) coloured by out-edge eigenvector centrality. (f) Precision-recall curves for subnetwork, using ChIP-seq database as reference. }}
  \label{fig:crispr}
\end{figure}

Applying reference fitting to the subset of 103 transcription factors considered in \cite{ishikawa2023renge}, we obtain a $103 \times 103$ directed, signed network of inferred TF-TF interactions (Figure \ref{fig:crispr}(c)). Since there is no definitive ground truth for real biological interaction networks, in Figure \ref{fig:crispr}(d) we show the top 25 TFs ranked by out-edge eigenvector centrality -- genes for which knockout data were provided are shown in red. In Figure \ref{fig:crispr}(e) we also show the network structure filtered for the top 2.5\% of interactions. Pou5f1 (Oct4) is the second highest ranked for centrality, reflecting its known role as a master regulator. Notably, many of the other top-ranked TFs did not have knockout information. This agrees with our simulation findings that reference fitting is able to integrate perturbational and dynamical information for network inference. Among other top ranked regulators, we find Lin28a, which together with Oct4, Sox2 and Nanog were found to be sufficient for reprogramming human somatic cells in a landmark study \cite{yu2007induced}. We emphasise that no knockout information for Lin28a itself was used in this analysis.  In contrast, in the same analysis for the network inferred by RENGE (Appendix Figure \ref{fig:crispr_edists}(b)) there is a clear bias for knockout TFs to have higher centrality
Finally, in Figure \ref{fig:crispr}(f) we calculated precision-recall curves for a subset of 18 transcription factors for which ChIP-seq binding information were available \cite{zou2024chip}. We found that reference fitting performed close to random when only run on wild-type data (AUPR ratio 1.06). Providing only a relatively small number of knockouts (8 out of 103 TFs considered) is sufficient to double the prediction performance (AUPR ratio 2.17). We remark that in this hiPSC dataset, the wild-type cell population is stationary (see \cite[Supplementary Note 2]{ishikawa2023renge} and \ref{fig:crispr}(c)) so the poor result in the wild-type case is to be expected.

\section{Discussion}

Motivated by information-rich time-resolved and perturbation single cell experiments, we propose a computational approach for joint trajectory and network inference. Our approach draws inspiration from the theory of entropy regularised optimal transport and inference for linear dynamical systems. We posit that a least action principle should be satisfied: the most likely system should be the one that minimises the total action of the observed dynamics. Using simulated data from both linear (Ornstein-Uhlenbeck) and non-linear (synthetic and biological) stochastic systems, we demonstrate that our approach is able to leverage both transient dynamics as well as perturbation information to infer better trajectories and networks. In particular, we find that perturbing a fraction of genes greatly improves network inference compared to only using unperturbed dynamics. We demonstrate the applicability of our approach to real biological time-series data with perturbations and show that the inferred networks agree with prior knowledge. 

In future work, we will address settings where a combination of steady-state and time-series data are available -- for instance where the wild-type system is observed across time, but perturbations are observed at only a single snapshot. Other potential extensions of our approach include modelling non-autonomous systems by allowing the networks to vary over time \cite{wang2024wendy}, as well as to utilise additional dynamical information such as RNA velocity or metabolic labelling \cite{zhang2023learning, xu2023dissecting}. 
Finally, theoretical results are a important direction to be investigated: while some theoretical work has been done for inference in Ornstein-Uhlenbeck processes at steady state \cite{dettling2023identifiability}, the case of transient dynamics without longitudinal measurements is less studied. 

\section*{Acknowledgements}

SZ gratefully acknowledges insightful discussions with, and support from, Dr. Xiaojie Qiu (Stanford) and funding from the Australian Government Research Training Program, Elizabeth and Vernon Puzey Scholarship, Prof. Maurice H Belz Fund and the School of Mathematics and Statistics at the University of Melbourne. 

\newpage 
\bibliographystyle{plain}
\bibliography{references}

\newpage 
\appendix 

\section{Supplementary Material}

\subsection{Supplementary figures}

\begin{figure}[h]
  \centering
  \begin{subfigure}{0.49\linewidth}
    \includegraphics[width = \linewidth]{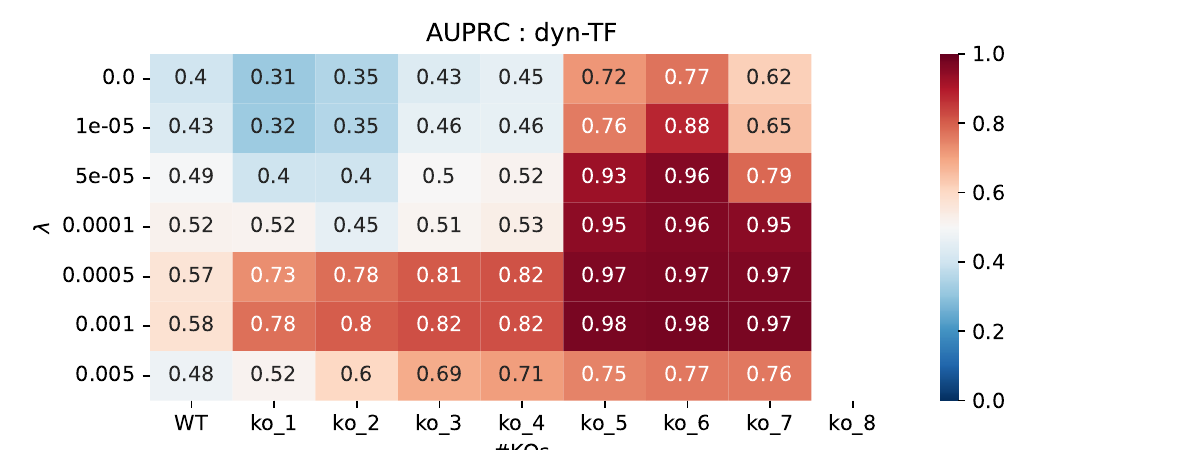}
    \caption{Trifurcating}
  \end{subfigure}
  \begin{subfigure}{0.49\linewidth}
    \includegraphics[width = \linewidth]{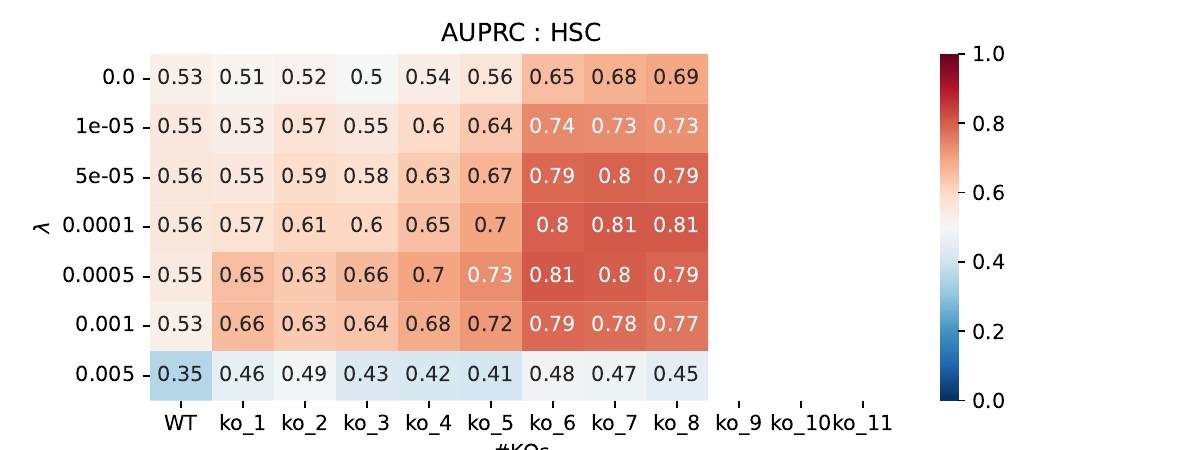}
    \caption{HSC}
  \end{subfigure}
  \caption{AUPRC scores for reference fitting in (a) trifurcating and (b) HSC systems, with different numbers of knockouts and different regularisation strengths $\lambda$}
  \label{fig:auprc_all}
\end{figure}

\begin{figure}[h]
  \centering 
  \begin{subfigure}[b]{0.245\linewidth}
    \includegraphics[width = \linewidth]{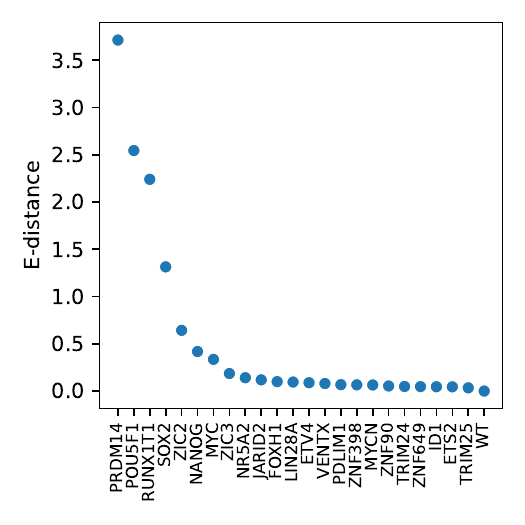}
    \caption{}
  \end{subfigure}
  \begin{subfigure}[b]{0.245\linewidth}
    \includegraphics[width = \linewidth]{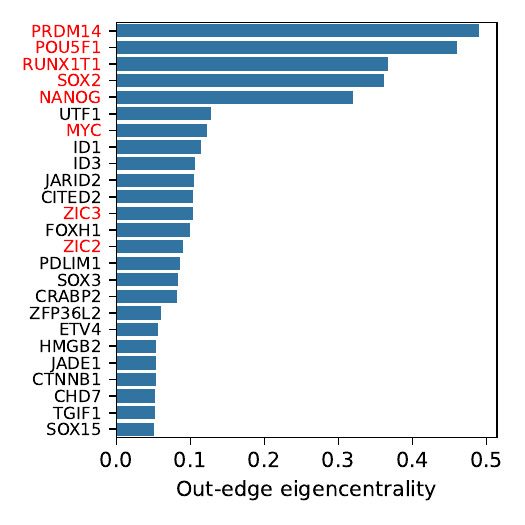}
    \caption{}
  \end{subfigure}
  \caption{(a) Knockouts ranked by energy distances between knockout population and wild type. (b) Out-edge eigenvector centrality for network inferred by RENGE. }
  \label{fig:crispr_edists}
\end{figure}

\subsection{\rev{Solving the reference fitting problem}}

The reference fitting problem between two distributions $(\mu, \mu')$ at times $t = 0, 1$ is
\begin{equation}
    \min_{A, \pi \in \mathbb{R}^{d \times d} \times \Cc(\mu, \mu')} \sigma^2 \KL(\pi | K^{\sigma}_{A}) + \Rc(A),
    \label{eq:rf_objective_appendix}
\end{equation}
with constraint set $(A, \pi) \in \mathbb{R}^{d \times d} \times \Cc(\mu, \mu')$. 
As we pointed out earlier, this objective is a non-convex problem due to the dependence of $K_A^\sigma$ on $A$. Furthermore, optimising in $A$ becomes numerically difficult when using the exact O-U transition kernel due to the connection between the drift and covariance (and thus the need to invert the covariance matrix becomes a difficulty). We remark that the setting where multiple time-points are available can be treated by considering the sum of similar terms as in \eqref{eq:rf_objective_appendix}. 

\begin{prop}[Existence of minimisers]
  Consider the problem \eqref{eq:rf_objective_appendix} where $K_A^\sigma$ is taken to be the approximate reference kernel, i.e.
  \[
    K_A^\sigma(x, x') \propto \exp\left( - \frac{ \| e^{A} x - x' \|_2^2}{2\sigma^2} \right). 
  \]
  If $\Rc$ is bounded below and $\Rc(A) \to \infty$ whenever $\| A \|_F \to \infty$, then \eqref{eq:rf_objective_appendix} has at least one minimiser. 
\end{prop}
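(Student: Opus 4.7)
The plan is the direct method of the calculus of variations. Write $F(A, \pi) := \sigma^2 \KL(\pi \,|\, K_A^\sigma) + \Rc(A)$. Since $K_A^\sigma$ is (up to a positive, continuously varying normalisation) a Gaussian probability kernel, $\KL(\pi \,|\, K_A^\sigma)$ is bounded below by a constant independent of $(A, \pi)$, and $\Rc$ is bounded below by hypothesis, so $F$ is bounded below. The feasible set is non-empty because $\mu \otimes \mu' \in \Cc(\mu, \mu')$, so a minimising sequence $(A_n, \pi_n)$ with $F(A_n, \pi_n) \to \inf F$ exists.

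For compactness, since $\mu, \mu'$ are empirical, $\Cc(\mu, \mu')$ is a compact, convex polytope in a finite-dimensional simplex, so $(\pi_n)$ has a convergent subsequence $\pi_n \to \pi^\star \in \Cc(\mu, \mu')$ automatically. For the $A$-component, the lower bound on $\KL$ gives $\Rc(A_n) \leq F(A_n, \pi_n) + C$ for a constant $C$, bounded along the minimising sequence; coercivity of $\Rc$ then forces $\|A_n\|_F$ to be bounded (otherwise $\Rc(A_n) \to \infty$, contradicting minimality), so a further subsequence yields $A_n \to A^\star$ in $\mathbb{R}^{d\times d}$.

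For passage to the limit, $\KL(\pi \,|\, K_A^\sigma)$ unpacks as
\[
\KL(\pi \,|\, K_A^\sigma) = \sum_{i,j} \pi_{ij} \log \pi_{ij} + \frac{1}{2\sigma^2} \sum_{i,j} \pi_{ij} \|e^A x_i - x_j'\|_2^2 - \log Z(A),
\]
and each piece is jointly continuous in $(A, \pi)$: the entropy by continuity of $p \mapsto p \log p$ with $0 \log 0 = 0$, the quadratic term because $A \mapsto e^A$ is real-analytic, and the normalisation $\log Z(A)$ for the same reason. Combined with lower semicontinuity of $\Rc$ this yields $F(A^\star, \pi^\star) \leq \liminf_n F(A_n, \pi_n) = \inf F$, so $(A^\star, \pi^\star)$ attains the infimum.

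The main obstacle, and really the only delicate point, is that the hypothesis as stated does not explicitly require $\Rc$ to be lower semicontinuous, yet this is needed in the limit passage above (a pathological coercive but discontinuous $\Rc$, such as $\Rc(A) = \|A\|_F^2 + c \cdot \mathbf{1}_{A = 0}$ for a jump of size $c$ at the origin, can fail to attain its infimum). The statement should therefore be read as implicitly assuming at least lower semicontinuity of $\Rc$; for the elastic net \eqref{eq:elastic_net} used in practice, $\Rc$ is continuous and the assumption is automatic.
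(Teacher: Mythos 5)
Your proof is correct and follows essentially the same route as the paper's: expand the KL divergence against the approximate kernel into an entropy term plus a quadratic transport term, then combine compactness of $\Cc(\mu, \mu')$ with coercivity of $\Rc$ to obtain a convergent minimising subsequence and pass to the limit. Your observation that lower semicontinuity of $\Rc$ is implicitly required is a fair refinement rather than a divergence --- the paper simply asserts the objective is ``continuous,'' which likewise presupposes (at least) continuity of $\Rc$, automatic for the elastic net actually used.
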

\begin{proof}
  For the approximate transition kernel, one has that
  \[
    -\log K_A^\sigma (x, x') = \frac{1}{2\sigma^2} \| e^{A} x - x' \|_2^2 + \mathrm{const}.  \\ 
  \]
  Up to a constant that is independent of $A, \pi$, then, the objective \eqref{eq:rf_objective_appendix} is equal to
  \begin{equation}
    \min_{A, \pi} \sigma^2 \sum_{ij} \pi_{ij} \log\pi_{ij} + \frac{1}{2} \sum_{ij} \pi_{ij} \| e^{A} x_i - x_j \|_2^2 + \Rc(A).
    \label{eq:rf_objective_rearranged}
  \end{equation}
  The first term is an entropy term and is bounded below, and the second term is non-negative. Let $\Rc$ be a coercive regulariser, i.e. suppose that $\Rc(A) \to +\infty$ whenever $\| A \|_F \to \infty$.
  The objective \eqref{eq:rf_objective_rearranged} is then continuous and bounded below on $\mathbb{R}^{d \times d} \times \Cc(\mu, \mu')$, coercive on $\mathbb{R}^{d \times d}$, and $\Cc(\mu, \mu')$ is compact. We conclude existence of a global minimiser $(A^\star, \pi^\star)$. 
\end{proof}

\begin{rmk}[Alternating scheme]
  Let $F(A, \pi)$ be the objective of \eqref{eq:rf_objective_appendix} for which we seek a local minimum. Let $(A_0, \pi_0) \in \mathbb{R}^{d \times d} \times \Cc(\mu, \mu')$ be given and consider the alternating minimisation scheme
  \begin{align*}
    A_{k+1} &\gets \arg\min_{A} F(A, \pi_k) \\
    \pi_{k+1} &\gets \arg\min_{\pi} F(A_{k+1}, \pi). 
  \end{align*}
  This generates a sequence $(A_k, \pi_k)$ for $k \ge 0$ such that $F(A_k, \pi_k)$ is nonincreasing, by construction. Since $F$ is bounded below, the sequence of objectives $F(A_k, \pi_k)$ must converge in its value. However, since $F$ is non-convex in $A$, the sequence $(A_k, \pi_k)$ need not converge. 
\end{rmk}

\begin{rmk}
  If $\Kc$ were a convex family of densities, issues of convexity could be alleviated since then the reference fitting problem has the form
  \[
    \min_{(K, \pi) \in \Kc \times \Cc(\mu, \mu')} \KL( \pi | K ),
  \]
  and $\KL$ is jointly convex (but not strictly) in its arguments. In this case, reference fitting would amount to (a variant of) alternating projections onto convex sets. 
  In the case of Ornstein-Uhlenbeck reference processes, however, reference densities are multivariate Gaussian which do not form a convex set in the space of densities.  
\end{rmk}

\begin{rmk}[Convergence of alternating scheme]
  The update in $\pi$ is (strongly) convex. The update in $A$ is non-convex. Assume that $\Rc(A)$ is convex. We adopt the scheme proposed by \cite{xu2013block} for the problem of block-coordinate minimisation of 
  \begin{align*}
    A_{k+1} &\gets \arg\min_{A} F(A, \pi_k) + \beta \| A - A_k \|_F^2 \\
    \pi_{k+1} &\gets \arg\min_{\pi} F(A_{k+1}, \pi). 
  \end{align*}
  Note the additional proximal term added to the non-convex block for $A$, which is required for convergence to a critical point.

  As previously we note that the objective $F(A, \pi)$ can be re-written in the form:
  \[
    \min_{(A, \pi) \in \mathbb{R}^{d \times d} \times \Cc(\mu, \mu')} \left[ \sigma^2 \sum_{ij} \pi_{ij} \log\pi_{ij} + \frac{1}{2} \sum_{ij} \pi_{ij} \| e^{A} x_i - x_j \|_2^2 \right] + \Rc(A).
\]
The feasible set for $(A, \pi)$ is closed and convex, the first two terms are smooth in $(A, \pi)$, strongly convex in $\pi$ and non-convex in $A$. The last term $\Rc(A)$ is convex in $A$ and possibly non-smooth.
This falls into the framework of \cite{xu2013block} which proves global convergence to a critical point under some additional technical conditions. The the update in $\pi$ is handled by a standard block minimisation, but the update in $A$ must be handled by a proximal update (see Eq. 1.3b in \cite{xu2013block}), i.e.
\[
  A_{k+1} \gets \arg\min_{A} F(A, \pi_k) + \beta \| A - A_k \|_F^2 . 
\]
\end{rmk}

\subsection{Datasets and preprocessing}

Code to reproduce results can be found at \url{https://github.com/zsteve/referencefitting}.

\paragraph{8-D non-equilibrium OU process}

Particles were simulated following an Ornstein-Uhlenbeck process \eqref{eq:linear_sde} with
\[
  A = 1.25 \begin{bmatrix}
    0 & 1  & 0  & 0  & 0  & 0  & 0  & 0 \\
    0  & 0 & 1  & 0  & 0  & 0  & 0  & 0 \\
    0  & 0  & 0 & 1  & 0  & 0  & 0  & 0 \\
    0  & 0  & 0  & 0 & 1  & 0  & 0  & 0 \\
    0  & 0  & 0  & 0  & 0 & 1  & 0  & 0 \\
    0  & 0  & 0  & 0  & 0  & 0 & 1  & 0 \\
    0  & 0  & 0  & 0  & 0  & 0  & 0 & 1 \\
    -1  & 0  & 0  & 0  & 0  & 0  & 0  & 0 
  \end{bmatrix} - 1.25 I 
\]
and $\sigma = 0.05I$. We independently sampled 10 snapshots each with 100 particles, evenly spaced between $t = 0$ and $t = 10$ with $x_0 = 0.25e_1 + \Nc(0, 0.05^2I)$.

\paragraph{Synthetic trifurcating and HSC trajectories}

We used BoolODE \cite{pratapa2020benchmarking} to simulate 1000 cells independently from each trajectory. To generate time-resolved snapshots, the simulation time was binned into $T = 5$ discrete timepoints. Simulated expression values were log-transformed before being used as input for downstream tasks. In order to generate each knockout trajectory, the boolean rules were modified such that the knocked-out gene is only subject to self-activation, and setting its initial expression level to zero.

\rev{To systematically examine the performance for varying numbers of knockouts, 10 independently generated simulated datasets (each with 1000 cells) for each trajectory are used. Starting from WT-only, we progressively add knockout trajectories for genes in order of their out-edge eigencentrality, from highest to lowest. In order:
\begin{itemize}
\item Trifurcating: g3, g6, g4, g5, g2, g7, g8
\item HSC: Gata1, Fli1, Fog1, Eklf, Scl, Gfi1, EgrNab, cJun.
\end{itemize}
For each instance, we run reference fitting with different choices of the LASSO regularisation hyperparameter: $\lambda \in \{0, 10^{-5}, 5\times 10^{-5}, 10^{-4}, 5\times 10^{-4}, 10^{-3}, 5\times 10^{-3} \}$. We measure the accuracy of the inferred network in terms of AUPRC and for each instance we report the best score across the different regularisation hyperparameters. The full set of results, averaged across the 10 datasets, are shown in Figure \ref{fig:auprc_all}. 
For comparison, we also run the following methods:
\begin{itemize}
  \item RENGE \cite{ishikawa2023renge} with the same time-series and knockout combinations as input. RENGE is able to model both temporal and knockout data. 
  \item BICYCLE \cite{rohbeck2024bicycle} with all cells and all knockouts as input. BICYCLE models knockout data but without temporal resolution. 
  \item GENIE3 \cite{huynh2010inferring} with wild-type cells. GENIE3 models single cell data without knockouts or temporal resolution.
  \item dynGENIE3 \cite{huynh2018dyngenie3} with wild-type trajectory. dynGENIE3 models temporally resolved single cell data.
  \item SINCERITIES \cite{papili2018sincerities} with wild-type trajectory. SINCERITIES models temporally resolved single cell data.
  \item Graphical LASSO (GLASSO) \cite{friedman2008sparse} with wild-type cells. 
  \end{itemize}
}

\paragraph{Single-cell CRISPR perturbation time-series}

The raw count data for \cite{ishikawa2023renge} were retrieved from the Gene Expression Omnibus database (accession GSE213069). Columns corresponding to gRNAs were removed, then counts were normalised using the \texttt{scanpy.pp.normalize\_total} function with default options, log-transformed. Prior to dimensionality reduction, highly variable genes were selected using \texttt{scanpy.pp.highly\_variable\_genes}. As an input to network inference, we considered the set of 103 TFs from \cite{ishikawa2023renge} and considered only cells that received a single knockout. To construct the ChIP-seq reference, we obtained experimental binding information from ChIP-atlas \cite{zou2024chip} for the following TFs:
Chd7, Ctnnb1, Dnmt1, Foxh1, Jarid2, Kdm5b, Med1, Myc, Nanog, Nr5a2, Pou5f1, Prdm14, Sall4, Sox2, Tcf3, Tcf7l1, Ubtf, Znf398 with a 1kb window.

\end{document}